\documentclass{article}
\usepackage{biblatex}
\usepackage[utf8]{inputenc}
\usepackage[english]{babel}
\usepackage[margin=1in]{geometry}
\usepackage{amsmath,amssymb,amsfonts,amsthm,mathtools}
\usepackage{xcolor}
\usepackage{tikz}
\usetikzlibrary{fit}
\usepackage{quantikz}
\usepackage[inline]{enumitem}

\usepackage{hyperref}
\usepackage[nameinlink]{cleveref}

\hypersetup{colorlinks=true,linkcolor=blue,filecolor=magenta,urlcolor=cyan}
\newtheorem{theorem}{Theorem}[section]

\newtheorem{lemma}[theorem]{Lemma}
\theoremstyle{remark}
\newtheorem*{remark}{Remark}
\theoremstyle{definition}

\newtheorem{prop}{Proposition}
\newtheorem{conjecture}{Conjecture}
\newcommand{\R}{\mathbb{R}}
\newcommand{\C}{\mathbb{C}}

\title{A remark on the Brown-Susskind conjecture  }
\author{Ranee Brylinski\thanks{She passed away before this paper was submitted} and Jean-Luc Brylinski}

\date{September 2026}

\begin{document}
\maketitle

\begin{abstract}
In the spirit of the Brown-Susskind conjecture, proven in \cite{haf} and \cite{li}, we study the growth of the dimension of the set of $n$-qubit unitaries which can be obtained as a product of a fixed number of $2$-qubit gates from a fixed set of pairs of qubits. We show that this dimension increases strictly when one more pair is added,  for at least some choice of the $2$-qubit pairs at each step.
\end{abstract}

\section{Introduction}
Brown and Susskind \cite{sus-br} conjectured that a local random circuit’s quantum complexity grows linearly in the number of gates
until reaching a value exponential in the system size. The conjecture was proven in \cite{haf}, and two other proofs were given by Zhi Li \cite{li}. In this paper we show that the complexity will strictly increase at each step of the circuit for at least one choice of the $2$-qubits used in the next step of the circuit.

We denote by $U(2^n)$ the group of unitary gatess on $n$ qubits.
We will work in the context of a fixed set $S$ of unordered qubit pairs $\alpha=ij$ and the corresponding $U(4)$ subgroups of $U(2^n)$, called $G_{\alpha}$, of $2$-qubit gates for these two qubits. From this set $S$ we construct a graph $\Gamma$ whose nodes are the $n$ qubits and whose edges are the pairs $(ij)$ in $S$.  We then have

\begin{prop}The subgroups $G_{\alpha}$ where the $\alpha$ range over $S$ generate $U(2^n)$ if and only if the graph $\Gamma$ is connected.

\end{prop}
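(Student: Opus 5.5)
We prove the two directions separately.

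\emph{Only if.} Suppose $\Gamma$ is disconnected. Then the qubits split into two nonempty sets $A$ and $B$ such that every pair $\alpha\in S$ lies entirely inside $A$ or entirely inside $B$. Under $(\C^2)^{\otimes n}=(\C^2)^{\otimes A}\otimes(\C^2)^{\otimes B}$, each $G_\alpha$ is therefore contained in $U(2^{|A|})\otimes 1$ or in $1\otimes U(2^{|B|})$. The group generated by all $G_\alpha$ then lies in the image of $U(2^{|A|})\times U(2^{|B|})$. This image is a proper closed subgroup of $U(2^n)$ because it preserves product states, whereas $U(2^n)$ maps a product state to an entangled one. A dimension count gives the same conclusion: $4^{|A|}+4^{|B|}-1<4^{n}$ when $|A|,|B|\ge 1$.

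\emph{If.} The main tool is a Lie algebra argument. The groups $G_\alpha\cong U(4)$ are compact and connected, so the subgroup $H$ they generate is path-connected. By Yamabe's theorem, or directly because products of finitely many compact connected subgroups form compact sets whose increasing union stabilizes, $H$ is a closed connected Lie subgroup. Its Lie algebra $\mathfrak h$ is the Lie algebra generated by the $\mathfrak g_\alpha=\mathfrak u(4)_{ij}$. Hence it suffices to show that the Lie algebra generated by the local Pauli-string operators $i\,P_iQ_j$, with $(ij)\in S$ and $P,Q\in\{I,X,Y,Z\}$, is all of $\mathfrak u(2^n)$. That algebra is spanned by $i\,\sigma$ for all Pauli strings $\sigma$.

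The plan is induction on a growing connected subgraph, adding one vertex at a time. Take a spanning tree and order the vertices so that each new vertex $k$ is attached by an edge $(jk)$ to the already-built set $T$. Assume $\mathfrak h$ contains $i\sigma$ for every Pauli string $\sigma$ supported in $T$, which is the inductive hypothesis (the base case is a single edge, where $\mathfrak u(4)$ is given). We must show that $\mathfrak h$ then contains $i\,\sigma\otimes P_k$ for every $\sigma$ supported in $T$ and every $P\in\{X,Y,Z\}$. The key computation uses the fact that two Pauli strings either commute or anticommute, and in the anticommuting case $[i\sigma,i\tau]=\pm 2i\,\sigma\tau$. Given a string $\sigma$ on $T$, choose $R\in\{X,Y,Z\}$ anticommuting with $\sigma_j$. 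If $\sigma_j=I$, first multiply $\sigma$ by some $R'_j$ inside $T$, then correct afterwards. Commuting $i\,\sigma'$ (which lies in $T$) with $i\,R_j P_k\in\mathfrak g_{jk}$ yields $i\,\sigma'' \otimes P_k$, where $\sigma''$ is an arbitrary string on $T$ with nontrivial $j$-component. Strings on $T$ with trivial $j$-component and with $P_k$ attached are then obtained by one more bracket with $i\,Q_j$, or by using two different $j$-letters. Together with the strings already supported on $T$ and the single-qubit terms on $k$, this gives all strings on $T\cup\{k\}$.

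The main obstacle I expect is this bookkeeping: showing that every Pauli string on $T\cup\{k\}$ arises from nested brackets, and in particular handling the case where the $j$-component is $I$. Pauli anticommutation together with the freedom to choose the letter at $j$ should dispose of it. The closedness of $H$ is standard, and I would cite it rather than reprove it.
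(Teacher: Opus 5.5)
Your plan (vertex-by-vertex induction over a spanning tree, using that anticommuting Pauli strings bracket to a multiple of their product) is sound, and your first computation is right. For $\sigma'$ on $T$ with $\sigma'_j=B\neq I$ and $R$ anticommuting with $B$, the bracket $[i\sigma', iR_jP_k]$ is a nonzero multiple of $i\,\sigma''\otimes P_k$, where $\sigma''$ agrees with $\sigma'$ off $j$ and carries the third Pauli at $j$. So $\mathfrak h$ contains every $i\,\sigma''\otimes P_k$ with $\sigma''_j\neq I$.

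\textbf{The gap.} The remaining case is $\sigma\neq I$, $\sigma_j=I$, $P\neq I$. You flag it but do not settle it, and the remedies you suggest cannot work. Bracketing a string whose $j$-letter is $A\neq I$ with $iQ_j$ gives either $0$ (if $Q=A$) or a string whose $j$-letter is proportional to $AQ\neq I$. More generally, a product of two Pauli strings has $I$ at $j$ only if their $j$-letters are \emph{equal}, so ``two different $j$-letters'' never produce the string you want. Since equal $j$-letters commute, the anticommutation needed for a nonzero bracket must come from another site.

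Site $k$ does the job. Let $\sigma^{A}$ be $\sigma$ with a letter $A\neq I$ placed at $j$, and choose anticommuting $Q,R\in\{X,Y,Z\}$ with $QR=\pm iP$. Then $i\,\sigma^{A}\otimes Q_k\in\mathfrak h$ by your first step, and $i\,A_jR_k\in\mathfrak g_{jk}$. The two strings agree at $j$ and anticommute at $k$, so
\[
[\,i\,\sigma^{A}\otimes Q_k,\ i\,A_jR_k\,]=\pm 2i\,\sigma\otimes P_k .
\]
With this line the induction closes.

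\textbf{A smaller point.} Yamabe's theorem makes the path-connected group $H$ an analytic (immersed) subgroup, not a closed one. Your claim that the increasing union of compact products stabilizes is asserted without proof; it is essentially the Brylinski--Brylinski Lemma 4.2 that the paper cites. You do not need closedness. Once the Lie algebra of the analytic subgroup $H$ is all of $\mathfrak u(2^n)$, $H$ is open in the connected group $U(2^n)$, hence equal to it. That gives generation as an abstract group directly. The paper instead takes the closure of the generated group and then invokes that lemma to get back to the abstract group.

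\textbf{Comparison with the paper.} On the Lie-algebra side, the paper inducts on the length of a path in $\Gamma$, splitting a path-supported string into a bracket of strings on shorter subpaths. Its induction only establishes membership for strings supported along a single path. That is not every Pauli string when $\Gamma$ has no Hamiltonian path, e.g.\ $X\otimes X\otimes X$ on the leaves of a three-leaf star. Your spanning-tree induction has no such restriction, so once repaired it is the more general argument. Your ``only if'' direction is the paper's, with the properness of $U(2^{|A|})\times U(2^{|B|})$ made explicit.
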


\begin {proof} First suppose the graph $\Gamma$ is connected. Let $H$ be the closure of the subgroup of $U(2^n)$ generated by the $G_{\alpha}$ where the $\alpha$'s range over $S$. As a closed subgroup of a compact Lie group, $H$ is a Lie subgroup so it has a  Lie algebra $\frak h$, which is a Lie subalgebra of the Lie algebra $\frak u(2^n)$ of $U(2^n)$. T

We will show that any Pauli string on $n$ qubits belongs to $\frak h$. Here by Pauli string we mean a tensor product $P_1\otimes\cdots\otimes P_n$ where each $P_j$ is a Pauli operator (possibly $I,X,Y,Z$) acting on the $j$-th qubit. If we have Pauli operators $P_1,\dots,P_k$ supported on $k$ qubits, we will denote by $P_1\otimes\cdots\otimes P_n$ the Pauli string on $n$ qubits where $I$ is inserted for the missing qubits.

Given a path $\tau=i_1\cdots i_{D+1}$ of length $D+1$  in $\Gamma$, a Pauli string 
$P_1\otimes P_{D+1}$ is said to be supported on 
 $\tau$ if each $P_j$ belongs to the subgroup $G_{i_ji_{j+1}}$.

We will show by induction on $D$ that any such Pauli string belongs to $\frak h$.

To start the induction, if $D=1$, $\alpha=(i_1i_2)$ belongs to $S$ and so $P_{i_1}\otimes P_{i_2}$ lies in $\frak h$.

For the induction step, we will for simplicity write $P_1$ for $P_{i_1}$ and $P_2$ for $P_{D+1}$. We may assume $P_1\neq I$, otherwise our Pauli string is supported on a path of smaller length $D$, so it belongs to $\frak h$ by induction. Then we can write $P_1$ as a bracket $P_1=1/2\, [Q,R]$ for some Paulis $Q,R$ bn the qubit $i_1$.
We will write our length $D+1$ Pauli string as $P_1\otimes W\otimes P_{D+1}$ where the Pauli string $W$ lives on qubits $2$ thru $D$. If $W=I$, then we have

\begin{equation}
P_1\otimes I\otimes P_2=\frac{12}\,[Q\otimes V\otimes I,\,R\otimes V\otimes P_2]
\end{equation}
 for any non-trivial Pauli string $V$ on qubits $2$ thru $D$. 
Now $Q\otimes V\otimes I$ belongs to $\frak h$ by the inductive hypothesis, since it is supported on a path of length $D$. Writing $V=\frac{1}{2}\, [A,B]$ for some Pauli strings $A,B$, we have
\begin{equation*}
R\otimes V\otimes P_2 \,=\frac{1}{2}\,[\,R\otimes A\otimes I,\ I\otimes B\otimes P_2\,].
\end{equation*}
a bracket of two Pauli strings which are themselves in $\frak h$ by induction. Hence $P_1\otimes I\otimes P_2$
 belongs to $\frak h$. 

On the other hand, if $W\neq I$, we have $W=\frac{1}{2}\, [U,V]$  for some Pauli strings $U,V$ and we obtain

\begin{equation}
P_1\otimes W\otimes P_2=\frac{1}{2}\, [P_1\otimes U\otimes I,I\otimes V\otimes P_2]
\end{equation}
This is a bracket of two operators which both belong to $\frak h$ by induction.

Thus $\frak h$ contains the Pauli string basis of $\frak u(2^n)$, so it is equal to $\frak u(2^n)$. Hence the closed subgroup subgroup $H$ is equal to $U(2^n)$. 

Then by \cite{bry, Lemma 4.2}, the closed subgroups $G_{\alpha}$ for $\alpha\in S$ generate $U(2^n)$ as an  abstract group, i.e. any operator in $U(2^n)$  is a finite product of operators which belong to one of the subgroups.

\vskip .13 in

On the other hand, if $\Gamma$ is not connected, then all the $G_{\alpha}$ for $\alpha\in S$ belong to a product of several smaller unitary groups corresponding to the connected components of $\Gamma$, hence all these subgroups put together cannot generate $U(2^n)$.
\end{proof}

We consider a sequence $\tau=\alpha_1,\dots,\alpha_k$ of qubit pairs and the corresponding construction map
\begin{equation*}
F_k: G_{\alpha_1}\times\cdots\times G_{\alpha_k}\to U(2^n).
\end{equation*}

\begin{equation}
F_k(g_1,\dots,g_k)=g_1\cdots g_k
\end{equation}
We denote the image of $F_k$ by $\mathcal U_k$, similarly to \cite{li}.

In contrast to \cite{haf} and \cite{li}. we do not choose an architecture, which means a sequence of qubit pairs which keeps repeating, like the staircase architecture or the wheel spoke architecture. Instead, we let the sequence of $2$-qubit pairs evolve freely within our alphabet $S$ of qubit pairs. Our main result is

\begin{theorem} \label{th} 
Given a  sequence $\tau=\alpha_1,\cdots,\alpha_k$ of qubit pairs for which $\mathcal U_k$ does not have maximal dimension, there exists a choice of another $\alpha_{k+1}$ in the alphabet $S$such that $dim(\mathcal U_{k+1})> dim(\mathcal U_k)$.

\end{theorem}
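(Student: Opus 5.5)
The plan is to turn the failure of dimension growth for every choice of $\alpha_{k+1}$ into the statement that certain left-invariant vector fields are tangent to a submanifold, and then use closure under Lie brackets. I assume, as in the Proposition, that $\Gamma$ is connected, so that ``maximal dimension'' means $\dim U(2^n)=4^n$. If $\Gamma$ is not connected, the same argument works with $U(2^n)$ replaced by the closed subgroup generated by the $G_\alpha$.

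\textbf{Step 1: dimension equals generic rank.} Since $F_k$ is a real-algebraic map between compact real-algebraic manifolds, $\mathcal U_k$ is semialgebraic. Its dimension equals $r_k=\max_g \operatorname{rank} dF_k(g)$. The set $\Omega\subset G_{\alpha_1}\times\cdots\times G_{\alpha_k}$ where this maximal rank is attained is open and dense.

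Left-translating by $p=F_k(g)$, the image of $dF_k(g)$ is $p\cdot W(g)$, where
\begin{equation*}
W(g)=\sum_{j=1}^k \mathrm{Ad}\big((g_{j+1}\cdots g_k)^{-1}\big)\,\mathfrak g_{\alpha_j}.
\end{equation*}
For $F_{k+1}$ evaluated at $(g,h)$, the image contains $p\cdot W(g)+p\,\mathfrak g_{\alpha_{k+1}}$ when $h=1$. In general it equals $ph\cdot\big(\mathrm{Ad}(h^{-1})W(g)+\mathfrak g_{\alpha_{k+1}}\big)$.

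\textbf{Step 2: the contradiction hypothesis.} Suppose that $\dim\mathcal U_{k+1}=\dim\mathcal U_k$ for every $\alpha_{k+1}\in S$. Since $\mathcal U_{k+1}\supseteq\mathcal U_k$, this means $r_{k+1}=r_k$ for every choice. Evaluating at $(g,1)$ with $g\in\Omega$, the rank bound forces $p\,\mathfrak g_\alpha\subset p\,W(g)$ for every $\alpha\in S$.

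Fix $g_0\in\Omega$. By the constant rank theorem, $F_k$ maps a neighbourhood of $g_0$ onto an embedded $r_k$-dimensional submanifold $M\subset U(2^n)$, with $T_pM=p\,W(g)$ for $p=F_k(g)$. Hence for every $X\in\mathfrak g_\alpha$, $\alpha\in S$, the left-invariant vector field $X^L$ (with $X^L(p)=pX$) is tangent to $M$ at every point of $M$.

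\textbf{Step 3: bracket closure.} Vector fields tangent to an embedded submanifold have Lie bracket tangent to it, and $[X^L,Y^L]=[X,Y]^L$. Therefore $p\,\mathfrak l\subset T_pM$, where $\mathfrak l$ is the Lie algebra generated by the $\mathfrak g_\alpha$, $\alpha\in S$.

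The proof of the Proposition shows, through explicit brackets of Pauli strings, that $\mathfrak l=\mathfrak u(2^n)$ when $\Gamma$ is connected. So $r_k=\dim T_pM\ge 4^n$, which means $\mathcal U_k$ already has maximal dimension. This contradicts the hypothesis, so some $\alpha_{k+1}\in S$ gives $\dim\mathcal U_{k+1}>\dim\mathcal U_k$.

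\textbf{Main obstacle.} The delicate step is Step 2. It requires checking that ``dimension of the image'' really equals the generic rank in the semialgebraic setting, via Sard's theorem and semialgebraic stratification. It also requires that the rank inequality $r_{k+1}\le r_k$ holds pointwise and not merely generically. That is, for every $g\in\Omega$, the rank at $(g,1)$ is at most $r_{k+1}=r_k$, so $p\,\mathfrak g_\alpha$ cannot add new directions. Lower semicontinuity of rank, together with $r_{k+1}$ being a maximum over all points, handles this, and I would verify it carefully. Once tangency of the $X^L$ on an honest submanifold is in place, the bracket argument is standard.
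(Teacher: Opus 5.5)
Your proof is correct, and it is the infinitesimal counterpart of the paper's argument rather than the same argument.

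The paper works globally with Zariski closures. $\overline{\mathcal U_k}$ is irreducible, being the image of the irreducible variety $G_{\alpha_1}\times\cdots\times G_{\alpha_k}$. So $\overline{\mathcal U_k}\subseteq\overline{\mathcal U_{k+1}}$ with equal dimensions forces equality, whence $\overline{\mathcal U_k}\,G_\alpha\subseteq\overline{\mathcal U_k}$ for every $\alpha\in S$. Group generation then gives $\overline{\mathcal U_k}=U(2^n)$. You replace the Zariski closure by a local constant-rank slice $M$, right-invariance under $G_\alpha$ by tangency of the fields $X^L$ with $X\in\mathfrak g_\alpha$, and group generation by bracket closure.

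This buys you several things:
\begin{enumerate*}[label=(\roman*)]
\item You need neither irreducibility of real-algebraic sets nor the abstract-generation lemma from \cite{bry}. You use only the standard fact $\dim\mathcal U_k=\max\operatorname{rank}dF_k$, plus Lie-algebra generation, which is exactly what the Pauli-bracket computation in the Proposition establishes.
\item The step you flagged as delicate is immediate. The rank at $(g,1)$ is at most $r_{k+1}$ simply because $r_{k+1}$ is a maximum over all points, so no semicontinuity is needed.
\item Your method never uses that the $G_\alpha$ are algebraic. With dimension read as generic rank, it also handles the non-periodic one-parameter subgroups of Section 3, which the paper leaves open.
\item Since $\mathrm{Ad}(h)\mathfrak g_\alpha=\mathfrak g_\alpha$ for $h\in G_\alpha$, the rank of $dF_{k+1}$ at $(g,h)$ equals $\dim(W(g)+\mathfrak g_\alpha)$ for every $h$. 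So $\alpha$ raises the dimension exactly when $\mathfrak g_\alpha\not\subseteq W(g)$ at a generic $g$. This gives a concrete test for choosing $\alpha_{k+1}$, which the paper says it lacks.
\end{enumerate*}

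The paper's route, for its part, is shorter and avoids all local differential analysis: no constant-rank theorem and no brackets of vector fields tangent to a submanifold.
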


We point out that we do not at all know how to choose the next qubit in the sequence in order to obtain this dimension increase. It may be that a good choice would be a qubit pair for which the new edge in the graph would be far away from recently used edges, but this is just speculation at this point. This uncertainty does not occur in the works \cite{haf} \cite{l}, which ioop recursively thru the $2$-qubit gates in a chosen architecture.
\vskip .1 in

The notion of dimension for $\mathcal U_k$ needs some explanation, which we give in the next section.

\section{Proof of  \cref{th}}

First let us explain what kind of objects the subsets $\mathcal U_k$ of $U(2^n)$ are.  The subset $\mathcal U_k$ of $U(2^n)$is the image of the construction map
$F_k:G_{\alpha_1}\times\cdots\times G_{\alpha_k}\to U(2^n)$. Now $U(2^n)$ and the $G_{\alpha}$'s are real-algebraic varieties, as they are subsets of the real vector space of complex $2^n\times 2^n$ matrices defined by algebraic equations. Here we note that every complex vector space is certainly a real vector space, with a doubling of the dimension. To wit, $U(2^n)$ is defined inside the vector space of $2^n\times 2^n$ complex matrices by the matrix equation $UU^{\dagger}=I$, and the entries of $UU^{\dagger}$ are polynomials in the real and imaginary parts of the entries of $U$. Similarly each $G_{\alpha_l}$ is defined by real-algebraic equations inside the vector space of $4\times 4$ matrices, namely by the conditions of commuting with each one-qubit Pauli string on a qubit which does not belong to the edge $\alpha_l$.
 
The construction map $F_k$ is thus a map from a real-algebraic variety to another one. Its image $\mathcal U_k$ is likely not to be a real-algebraic subvariety; after all, the image of the squaring map $x\to x^2$ from $\R$ to $\R$ is the set
$\R^+$ of non-negative real numbeers, which is not an algabraic subvariety:  it is defined by the polynomial inequality $x\ge 0$.

Now we have the following classical finiteness theorem, which is used extensively for decidability questions in computer science and in complexity theory. We state here the strong version of the theorem.

\begin{theorem}[Tarski--Seidenberg theorem]\label{TS}
The image $F(V)$ of a polynomial map $F:V\to W$ between real-algebraic varieties is a semialgebraic subset of $W$.
\end{theorem}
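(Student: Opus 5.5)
Since the statement is the classical Tarski--Seidenberg theorem, I would prove it by reducing it to a projection statement and then to elimination of a single real variable. First, embed $V\subset\R^m$ and $W\subset\R^p$ as real-algebraic subsets, and extend $F$ to a polynomial map on all of $\R^m$. The graph $\Gamma_F=\{(x,y): x\in V,\ y=F(x)\}$ is then a real-algebraic, hence semialgebraic, subset of $\R^{m}\times\R^p$. The image $F(V)$ is the image of $\Gamma_F$ under the coordinate projection $\R^{m+p}\to\R^p$. So it suffices to prove the following: \emph{the projection of a semialgebraic subset of $\R^{N+1}$ onto $\R^N$ (forgetting the last coordinate) is semialgebraic}. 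Applying this $m$ times, once for each coordinate of $x$, gives the theorem.

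For the one-variable step I would use the fact that every semialgebraic set is a finite union of basic sets $\{(u,t): f_1(u,t)\,\sigma_1\,0,\dots,f_r(u,t)\,\sigma_r\,0\}$ with $\sigma_i\in\{<,=,>\}$. Projection commutes with finite unions, so it suffices to handle one such basic set. The claim then reads as follows. For a fixed finite family $f_1,\dots,f_r\in\R[u][t]$, the set of parameters $u\in\R^N$ for which the sign condition $(\sigma_i)$ is realized by some real $t$ is semialgebraic. The plan is to find finitely many polynomials $g_1,\dots,g_s\in\R[u]$ such that the realizable sign conditions in $t$ depend only on the sign vector $(\operatorname{sign} g_j(u))_j$. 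Once these exist, the projection is a finite union of sets of the form $\{u: g_j(u)\,\tau_j\,0\ \forall j\}$, which is semialgebraic by definition.

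To construct the $g_j$, I would follow Hörmander's and Cohen's argument. Partition parameter space according to the signs of the coefficients (in $t$) of all the $f_i$, so that degrees are constant on each piece. Then close the family under derivatives $\partial_t$ and under pseudo-remainders (equivalently, take the subresultant or Sturm--Habicht coefficients of all pairs). By induction on the maximal degree, the signs of these finitely many polynomials in $u$ determine the full sign table of the family at each root of each member. The same data determine the sign table at each open interval between consecutive roots, and hence the set of realizable sign conditions. The key step is Thom's-lemma-style bookkeeping: at a root $\xi$ of $f$, the sign of $g(\xi)$ equals the sign of $(\mathrm{rem}(g,f))(\xi)$, which has lower degree. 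Combined with the signs of $f'$, this determines the interleaving of roots. The induction on degree then closes.

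I expect this one-variable sign-determination lemma to be the main obstacle. The inductive control of root ordering through remainders and derivatives is delicate, especially the uniformity in $u$ when leading coefficients vanish. I would handle that degeneration by first splitting according to the sign of every coefficient and truncating the polynomials accordingly. The remaining steps are formal: reducing to the graph, to projections, and to basic sets, and iterating over coordinates.
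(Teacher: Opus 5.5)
The paper gives no proof of this statement. It quotes Tarski--Seidenberg as a classical black box, used only so that $\mathcal U_k=F_k(G_{\alpha_1}\times\cdots\times G_{\alpha_k})$ is semialgebraic and has a well-defined dimension.

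Your proposal instead reconstructs the standard textbook argument, and it is a correct route:
\begin{enumerate*}[label=(\roman*)]
\item pass to the graph of $F$;
\item reduce to eliminating one real variable from a basic semialgebraic set;
\item settle the one-variable case by the Cohen--H\"ormander sign-table induction.
\end{enumerate*}
In that induction you replace a top-degree member $p$ by $p'$, together with the remainders of $p$ modulo $p'$ and modulo the other members. You read off the sign of $p$ at their roots from these remainders. You then use the monotonicity of $p$ between consecutive roots of $p'$ and of the other members to insert the roots of $p$. Your first reduction also shows that the version stated in the paper follows directly from the projection form.

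Compared with the paper's citation, your route gives a self-contained and effective quantifier-elimination statement. It is uniform in the parameters $u$ and valid over any real closed field. The price is the delicate bookkeeping you correctly identify as the main obstacle.

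Three points to tighten when you write it out:
\begin{enumerate*}[label=(\roman*)]
\item With pseudo-remainders one has $\mathrm{lc}(f)^e g=qf+\mathrm{prem}(g,f)$. So at a root $\xi$ of $f$, $\operatorname{sign} g(\xi)=\operatorname{sign}(\mathrm{lc}(f))^e\cdot\operatorname{sign}\mathrm{prem}(g,f)(\xi)$, which is fine because the sign of the leading coefficient is fixed on each piece.
\item The splitting by signs of coefficients must be redone at every inductive stage for the newly produced polynomials, not only once at the start. Induct on the pair (maximal degree, number of members of that degree).
\item The subresultant or Sturm--Habicht approach is an alternative, Collins-style argument, not literally equivalent to closing the family under derivatives and remainders.
\end{enumerate*}
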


Now what are semialgebraic subsets of $\R^n$? They are finite unions of basic semialgebraic subsets, which are cut out in $\R^n$ by finitely many polynomial equations $P(x_1,\dots,x_n)=0$ and finitely many polynomial inequalities
$Q(x_1,\cdots,x_n)>0$. For inatance, semialgebraic subsets of $\R$ are exactly the finite unions of points and of intervals which may be open  or closed at either end. 

\vskip .12 in

It turns out that any algebraic set $V\subset \R^n$ has a \emph{dimension}, say $D$, which may be given three equivalent definitions:

\vskip .12 in

(1) $D$ is the largest dimension of an open ball that can be embedded smoothly inside $V$

(2) $V$ contains an open subest $U$ of $V$ which is dense in $V$ and has dimension $D$.

(3) the Zariski closure $\bar{V}$ of $V$ has dimension $D$ in the sense of real-algebraic geometry, that is the field of fractions of the algebra of polynomial functions on $V$ has transcendance deeree $D$.

\vskip .12 in

Now the Zariski closure $\bar{V}$ of $V$ is the algebraic subvariety of $\R^n$ cut out by all the polynomial equations
$P(x_1,\dots,x_n)=0$ which hold on $V$.

For instance, consider the half-cone $C$ in $\R^3$ defined by the conditions
$x^2+y^2-z^2=0,\, z\ge 0$. $C$ is a semialgebraic set, all the polynomials vanishing on $C$ are multiples of $x^2+y^2-z^2$, hence the Zariski closure $\bar{C}$ is the full cone cut out by the equation $x^2+y^2-z^2=0$.

\vskip .12 in

Now we are in position to prove Theorem \cref{th}.

\begin{proof} Let $D$ be the dimension of $\mathcal U_k$, so that $D<\dim(U(2^n))$. Assume no $G_{\alpha_{k+1}}$ exists for which $\mathcal U_{k+1}$ has dimension $>D$. Then we have the inclusion $\overline{\mathcal U_k}\subseteq \overline{\mathcal U_{k+1}}$ between two real-algebraic subvarieties of the same dimension. Now we have

\begin{lemma} The real-algebraic variety $ \overline{\mathcal U_k}$ is irreducible, i.e. if  $ \overline{\mathcal U_k}$  is the union of two Zariski closed subsets $SA$ and $B$, one of $A$ and $B$ is equal to $\overline{\mathcal U_k}$.

\end{lemma}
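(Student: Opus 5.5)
The plan is the standard argument that the image of an irreducible variety under a polynomial map has irreducible Zariski closure, applied to the construction map $F_k$.

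\emph{Step 1: reduce to irreducibility of the source.} The source is $M=G_{\alpha_1}\times\cdots\times G_{\alpha_k}$. Each $G_{\alpha}\cong U(4)$ is a connected compact Lie group and a real-algebraic group, so $M$ is a connected real-algebraic group. I will show that $M$ is irreducible as a real-algebraic variety, in the sense that every polynomial vanishing on a nonempty open subset of $M$ (open in the Euclidean topology) vanishes on all of $M$. The argument goes through analyticity. A polynomial function $f$ restricted to $M$ is a real-analytic function on the connected real-analytic manifold $M$. If $f$ vanishes on an open subset, then by the identity theorem it vanishes identically. It follows that whenever $f_1f_2=0$ on $M$, one of the $f_i$ vanishes on $M$: the zero sets $Z(f_1)$ and $Z(f_2)$ are closed and cover $M$, so by Baire one of them has nonempty interior, and that $f_i$ then vanishes identically. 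Hence the ideal of polynomials vanishing on $M$ is prime, and this is exactly irreducibility of the Zariski closure of $M$.

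\emph{Step 2: transfer to the image.} Suppose $\overline{\mathcal U_k}=A\cup B$ with $A$ and $B$ Zariski closed. Then $M=F_k^{-1}(A)\cup F_k^{-1}(B)$. Since $F_k$ is polynomial (matrix multiplication), both preimages are Zariski closed in $M$, cut out by the pullbacks $P\circ F_k$ of the defining polynomials. By Step 1, if neither preimage were all of $M$, we could choose polynomials $p$ vanishing on $A$ with $p\circ F_k\not\equiv 0$, and $q$ vanishing on $B$ with $q\circ F_k\not\equiv 0$. Then $(pq)\circ F_k\equiv 0$ on $M$, which contradicts the primeness obtained in Step 1. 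So, say, $F_k^{-1}(A)=M$. This gives $\mathcal U_k\subseteq A$, and since $A$ is Zariski closed, $\overline{\mathcal U_k}\subseteq A$. Therefore $A=\overline{\mathcal U_k}$.

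\emph{Main obstacle.} The delicate point is Step 1. Real-algebraic irreducibility is not implied by connectedness in general; for example, a connected real curve can have a reducible Zariski closure. What saves us here is that $M$ is a connected \emph{smooth analytic manifold}, so the identity theorem applies. The care needed is to make sure the argument is phrased through the prime ideal of polynomials vanishing on $M$, rather than through any topological decomposition of $M$, because only the ideal-theoretic statement is what the lemma requires.
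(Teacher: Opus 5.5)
Your proof is correct and has the same structure as the paper's: write $X=G_{\alpha_1}\times\cdots\times G_{\alpha_k}=F_k^{-1}(A)\cup F_k^{-1}(B)$, use irreducibility of $X$ to get, say, $F_k(X)\subseteq A$, and conclude $\overline{\mathcal U_k}=A$. The one difference is that the paper only asserts that $X$ is irreducible because it is a product of irreducible varieties, whereas you prove it for the whole product at once, using the identity theorem on the connected analytic manifold $M$ to show the vanishing ideal is prime, which fills in a step the paper takes for granted.
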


Here a Zariski closed subset is a subset which is cut out by finitely many polynomial equations. These are the closed subsets for a topology called the Zariski topology. Any  map betwen two real-algebraic varieties is continuous for the Zariski topologies of the domain and the target.
\vskip 1 in
To prove the lemma, first recall that $\mathcal U_k$ is the image of the polynomial map
\begin{equation*}
F_k: G_{\alpha_1}\times\cdots\times G_{\alpha_k}\to U(2^n).
\end{equation*}
Write $X=G_{\alpha_1}\times\cdots\times G_{\alpha_k}$ for short. We then have

\begin{align*}
X&=F_k^{-1}(F_k(X))\\
&=F_k^{-1}(\overline{F_k(X)})\\
&=F_k^{-1}(A)\cup F_k^{-1}(B)
\end{align*}

Since $X$ is a product of irreducible algebraic varieties, it is itself irreducible. Its subsets $F_k^{-1}(A)$ and $F_k^{-1}(B)$ are Zariski closed, as the map $F_k$ is Zariski continuous. Hence either $X=F_k^{-1}(A)$ or $X=F_k^{-1}(B)$. In the first case we have $F_k(X)\subseteq A$, hence $\overline{\mathcal U_k}=\overline{F_k(X)}=A$; the second case is similar.
\vskip .13 in

Now, going back to our inclusion $\overline{\mathcal U_k} \subseteq \overline{\mathcal U_{k+1}}$, both are irreducible algebraic varieties of dimension $D$, and so these two subvarieties must be equal. Hence $\overline{\mathcal U_k}=\overline{\mathcal U_{k+1}}$.

Now recalling that $\mathcal U_{k+1}$ is the product $G_1\cdots G_{k+1}$ inside $U(2^n)$, we have $\mathcal U_{k+1}=\mathcal U_k\,G_{k+1}$. When we saturate the Zariski closure $\overline{\mathcal U_k}$ under the action of $G_{k+1}$ by right multiplication, we observe that
\begin{equation}
\overline{\mathcal U_k}\,G_{k+1}\subseteq \overline{\mathcal U_k\,G_{k+1}}=\overline{\mathcal U_{k+1}}=\overline{\mathcal U_k}.
\end{equation}

The upshot is that $\overline{\mathcal U_k}$ is carried into itself by the right action of $G_{k+1}$. But as we allow 
$G_{k+1}$ to vary among all the subgroups $G_{\alpha}$ where $\alpha$ ranges rhe alphabet $S$, and these subgroups generate $U(2^n)$, it ensues that $\overline{\mathcal U_k}$ is carried into itself by the right action of  the whole of $U(2^n)$. Bur then $\overline{\mathcal U_k}$ is equal to $U(2^n)$, a contradiction.

\end{proof}.

\begin{remark} The inclusion $\overline{\mathcal U_k}\times G_{k+1}\subseteq \overline{\mathcal U_k\times G_{k+1}}$ alreafdy is used in \cite{li}, with different notations.

\end{remark} 

\section{The case of two one-parameter subgroups}
We consider two one-parameter subgroups $T_1=\{\exp(itH_1)\}$, $T_2=\{\exp(itH_2)\}$ inside $U(2^n)$, where the $H_i$'s are Hamiltonians, i.e. Hermitian operators on $\C^{2^n}$. Then we consider the subsets $\mathcal U_k=T_1T_2\cdots$ consisting of $k$-fold products taken from these two subgroups in order. Physically this is a time-dependent Hamiltonian evolution which jumps from one Hamiltonian to the other at random times. We believe the following conjecture is due to Seth Lloyd.

\begin{conjecture}
Assume $iH_1$ and $iH_2$ generate the Lie algebra of skew-Hermitian operators on $(\C^2)^{\otimes n}$ and that the one-parameter subgroups $T_1$ and $T_2$ are periodic. Then for $k\le 2^{2^n}$ the dimension of $\mathcal U_k$ is exactly equal to $k$.
\end{conjecture}

\begin{proof}
Being periodic, the one-parameter subgroups are closed real-algebraic subvarieties of $U(2^n)$. Then the methods of this paper show that the dimension of $\mathcal U_k$ increases by at least $1$ from $k$ to $k+1$, noting that at each step one is forced to change from one-parameter subgroup to the other. Of course the dimension cannot increase by more than $1$, so the increase is exactly $1$.
\end{proof}

In case the two one-parameter subgroups are not both periodic, one of them  no longer is a real-algebraic subvariety. It seems possible that one could tackle this case by viewing them  as subanalytic spaces in the sense of Heisuke Hironaka, but we have not pursued that.

\printbibliography

@article{haf,
  title     = {Linear growth of quantum circuit complexity},
  author    = {Haferkamp, Jonas and Faist, Philippe and Kothakonda, Naga B. T. and Eisert, Jens and Yunger Halpern, Nicole},
  journal   = {Nature Physics},
  publisher = {Springer Science and Business Media LLC},
  volume    = {18},
  number    = {5},
  pages     = {528--532},
  year      = {2022},
  month     = {mar},
  issn      = {1745-2481},
  doi       = {10.1038/s41567-022-01539-6},
  url       = {https://doi.org/10.1038/s41567-022-01539-6}
}

@article{sus-br,
  author  = {Brown, Adam and Susskind, Leonard},
  title   = {Second law of quantum complexity},
  journal = {Phys. Rev. D},
  volume  = {97},
  pages   = {086015},
  year    = {2018}
}

@misc{bry,
      title={Universal quantum gates}, 
      author={Jean-Luc Brylinski and Ranee Brylinski},
      year={2001},
      eprint={quant-ph/0108062},
      archivePrefix={arXiv},
      primaryClass={quant-ph},
      url={https://arxiv.org/abs/quant-ph/0108062}, 
}

@misc{li,
      title={Short Proofs of Linear Growth of Quantum Circuit Complexity}, 
      author={Zhi Li},
      year={2022},
      eprint={2205.05668},
      archivePrefix={arXiv},
      primaryClass={quant-ph},
      url={https://arxiv.org/abs/2205.05668}, 
}
\end{document}